\newtheorem{theorem}{Theorem}
\newcommand{\be} {\begin{equation}}
\newcommand{\ee} {\end{equation}}
\newcommand{\ba} {\begin{array}{l}}
\newcommand{\ea} {\end{array}}
\newcommand{\p} {\partial}
\newcommand{\lbd} {\lambda}
\newcommand{\al} {\alpha}
\begin{document}
\begin{center}\LARGE
Some remarks about  Lie and  potential symmetries of a class of Korteweg-de Vries  type  equations
\end{center}

Oleksii Pliukhin$^1$, Danny Arrigo$^2$ and Roman Cherniha$^3$
\vspace{.5cm}

$^1$ Poltava National Technical Yuri Kondratyuk University,

24, Pershotravnevyj Ave., Poltava 36011, Ukraine
\vspace{.2cm}

$^2$ University of Central Arkansas

201 Donaghey Ave., Conway, AR 72035, USA
\vspace{.2cm}

$^3$ Institute of Mathematics, Ukrainian National Academy of Sciences

3, Tereshchenkivs'ka Street, Kyiv 01004, Ukraine

\begin{abstract}
Preliminary  results about Lie  and potential  symmetries of a class of Korteweg-de Vries type equations  are presented. In order to prove existence of potential symmetries three different systems of so called determining equations are analysed. It is  shown that  two systems lead only to Lie symmetries while the third system produces potential symmetries provided  the equation in question has an appropriate structure.
\end{abstract}

\section{Introduction.}

The main object of this paper is the class of  the third-order  nonlinear PDEs
\be\label{q0}u_t=[A(u)u_{xx}+B(u)u_x+C(u)]_x,\ee
where $u=u(t,x)$  is an unknown function and $A(u), B(u)$  and $C(u)$  are arbitrary smooth functions
(hereafter  the lower subscripts $t$  and $x$ denote differentiation  with respect to these variables).
Obviously, \eqref{q0}  with $A(u)=0$  reduces to the class of reaction-convection equations, which was studied for long time by different mathematical techniques including symmetry-based methods (see  the recent book \cite{cher-sero-pliu-2018} and the references cited therein). We assume  $A(u)\not =0$ in what follows and do not consider the cases when \eqref{q0} is a linear PDE.

The PDE class \eqref{q0} contains as particular cases several well-known equations arising in applications.
Setting $A(u)=1, B(u)=0$  and  $C(u)=u^2$ one obtains the celebrated Korteweg-de Vries (KdV) equation
\be\label{0-1}u_t= u_{xxx} + 2uu_x,\ee
which is the classical example of integrable equation \cite{gard-gree-krus-miur-1967} and was investigated in a huge number of papers and several books is devoted to this equations (see, e.g., \cite{ablo-segu-1981}, \cite{das-1989}, \cite{miur-1976} and the papers cited therein). In the case  $A(u)=1, B(u)=0$  and  $C(u)=u^3$ one obtains the modified  Korteweg-de Vries (mKdV), which is related with the KdV  equation.

In the case $A(u)=1, B(u)=0$  and  $C(u)=u^m$,  the well-known $K(m,1)$ equation is obtained, which is a natural generalization of the KdV  equation.  Setting  $A(u)=-u^n, B(u)=0$  and  $C(u)=-u^m$  one derives an analog of  the well-known $K(m,n)$ equation
\be\label{0-2}u_t + (u^n)_{xxx} + (u^n)_x =0,\ee
which was introduced in \cite{rose-hyma-1993} as an example of  the KdV type  equation possessing so-called compactons (solitons with a compact support).  Eq.\eqref{0-2} was studied in several papers with aim to construct exact solutions \cite{brac-2004}, \cite{brac-2005}, \cite{rose-hyma-1993}, \cite{rose-zilb-2018}.

We also point out that the KdV-Burgers equation
\be\label{0-3}u_t= u_{xxx} + \lambda u_{xx} +\mu  uu_x\ee
belongs  to the  PDE class \eqref{q0}. This equation describes propagation of undular bores in shallow water \cite{benn-1966}, \cite{john-1972} and was studied in a number of papers (see, e.g. \cite{chen-2000} and references cited therein).

In the case $A(u)=1, B(u)=\lambda $  and  $C(u)=\mu u^m$, one obtains the generalized KdV-Burgers equation \cite{chen-2000}, \cite{pego-smer-wein-1993}.

 Investigation of  Lie symmetry of the KdV equation and its generalizations  was initiated many years ago.
 To the best of our knowledge, paper  \cite{ga-wi-92} is one of pioneering  papers in this direction (see also recent papers \cite{char-vane-soph-2014},
\cite{vane-post-2017}). Notably, all these papers are devoted to the cases, when the KdV type  equations contain variable coefficients depending on the time or/and space variables.
   However, the Lie symmetry classification of the PDE class \eqref{q0} was not yet under study to the best of our knowledge. Of course, Lie symmetry of  some particular equations belonging to this class is well-known (including KdV and $K(m,1)$ equations). Moreover, we are interested also in potential symmetries of equations belonging to the class \eqref{q0}.

Because a complete classification of Lie and potential symmetries of the  PDE class \eqref{q0} is a highly nontrivial problem, here we report  some particular results. Inspired by the recent paper~\cite{arri-ashl-bloo-16},
we started from comparing the determining equations for search Lie and potential symmetries.
After rigorous  analysis,  we derived several statements, which are proved in Section~2. In Section 3, conclusions and a plan of following investigations are presented.

\section{Main results.}

We are interested in comparing of Lie symmetries and potential  symmetries of equations  belonging to the PDE class ~\eqref{q0}. It is well-known that both types of symmetries can be derived by application of the classical Lie scheme,
which are described in many books and textbooks (see, for example, \cite{cher-sero-pliu-2018}, \cite{fu-sh-se-93}, \cite{ol-93}, \cite{ov-82}).
First of all it should be mentioned that the notion of potential symmetry is closely related to that of Lie symmetry.
In particular, each Lie symmetry of a given PDE is automatically a potential symmetry, however the vice versa statement, generally speaking,  is wrong. The notion of  the  potential symmetry was introduced in 1980s \cite{bl-re-ku-88}, \cite{blum-reid-kume-1988(E)}, \cite{kr-vi-89} and several nontrivial examples were simultaneously discovered.

We start from the most general form of the Lie symmetry operator
\be\label{q00} X=\xi^0(t,x,u)\p_t+\xi^1(t,x,u)\p_x+\eta(t,x,u) \p_{u}\ee
 of a given  equation of the form \eqref{q0}, in which the coefficient $\xi^0(t,x,u),  \xi^1(t,x,u)$ and
 $\eta(t,x,u)$ are some smooth functions.

 In order to find  potential symmetries of the given equation, one should introduce the new depended variable $v(t,x)$ using the  non-local  substitution $u=v_x$ and then the given PDE of the form \eqref{q0} is transformed into
 the system
 \be\label{q3} v_t=A(u)u_{xx}+B(u)u_x+C(u),\ v_x=u.\ee

Obviously, the most  general form of the Lie symmetry operator of this system possesses the form
\be\label{q2} X=\xi^0(t,x,u,v)\p_t+\xi^1(t,x,u,v)\p_x+\eta^1(t,x,u,v) \p_{u}+\eta^2(t,x,u,v)\p_{v}.\ee

On the other hand, any operator \eqref{q2} of Lie's invariance of system \eqref{q3},
 which contains at least one coefficient among $\xi^0(t,x,u,v),  \xi^1(t,x,u,v)$ and
 $\eta(t,x,u,v)$, which really depends on $v$, is a pure potential symmetry for the given scalar equation.

%To find potential symmetries of equation \eqref{q0}, was made and system~\eqref{q3} was obtained.

Now we apply the classical Lie algorithm for construction so-called system of determining equations (DEs) of the given PDE of the form \eqref{q0} (see, for example, \cite{cher-sero-pliu-2018}, \cite{fu-sh-se-93}, \cite{ol-93}, \cite{ov-82}).
This system  has the form
\begin{subequations}\begin{align}\label{q31}
&(\al_x+\xi^1_{xx})A'=0,\\\label{q32}
&(\al u+\beta)A'=(3\xi^1_x-\xi^0_t)A,\\\label{q33}
&(\al u+\beta)B'=(2\xi^1_x-\xi^0_t)B-3\al_x A,\\\label{q34}
&(\al_x+\xi^1_{xx})C'=(\al_{xx}+\xi^1_{xxx})B -(\al_{xxx}+\xi^1_{xxxx})A+\al_t+\xi^1_{xt},\\\label{q35}
&(\al_{x}u+\beta_{x})C'=-(\al_{xx}u+\beta_{xx})B- (\al_{xxx}u+\beta_{xxx})A+\al_tu+\beta_t,\end{align}
\end{subequations}
where $\xi^0=\xi^0(t),\ \xi^1=\xi^1(t,x),\  \eta=\al(t,x) u+\beta,\ \beta=\beta(t,x).$

System of DEs for finding  Lie symmetries of system~\eqref{q3}
%without the differential consequence
 has the form
\begin{subequations}\begin{align}\label{q41}
&(\al u+\gamma_x)A'=(3\xi^1_x-\xi^0_t)A, \\ \label{q42}
&(\al u+\gamma_x)B'=(2\xi^1_x-\xi^0_t)B-3\al_x A,\\ \label{q43}
&(\al u+\gamma_x)C'=(C_2-\xi^0_t)C-(\al_x u+\gamma_{xx})B- (\al_{xx}u+\gamma_{xxx})A-\xi^1_t u+\eta^2_t,\\
\label{q5}&\al=C_2-\xi^1_x,\end{align}
\end{subequations}
where $\xi^0=\xi^0(t),\ \xi^1=\xi^1(t,x),\  \eta^1=\al(t,x) u+\gamma_x,\ \gamma=\gamma(t,x),\ \eta^2=C_2 v+\gamma,\ C_2=const$.

System~\eqref{q41} was derived by straightforward application of the Lie algorithm. However, system~\eqref{q3} consists of PDEs of different order and one should take
into account  differential consequences of PDE of lower order   in order to identify  all possible Lie symmetries. So, take
into account  the differential consequence with respect to $x$,
 %i.e.$v_{xx} = u_x$,
 we obtain the three-component  system
\be\label{q4}v_t=A(u)u_{xx}+B(u)u_x+C(u),\ v_x=u,\ v_{xx} = u_x.\ee

\textbf{Remark.} We do not use the second differential consequence  $v_{xt} = u_t$  because it can be  shown that system of  DEs is still the same as for system~\eqref{q4}.

%there is no such a term, as $v_{tx}$, which we may substitute.

System of DEs for finding  Lie symmetries of system~\eqref{q4}
(i.e.  system~\eqref{q3} with the differential consequence of the second equation) has the form
\begin{subequations}\begin{align}\label{q81}
&(\al u+\beta)A'=(3\xi^1_x-\xi^0_t)A, \\ \label{q82}
&(\al u+\beta)B'=(2\xi^1_x-\xi^0_t)B-3(\al_v u+\al_x)A,\\ \label{q83}
&(\al u+\beta)C'=(\al+\xi^1_x-\xi^0_t)C-[\al_v u^2+(2\al_x +\xi^1_{xx})u+\beta_x]B- \\&\nonumber-[\al_{vv} u^3 +3\al_{vx}u^2+(3\al_{xx}+2\xi^1_{xxx})u+\beta_{xx}]A-\xi^1_t u+\eta^2_t,\\ \label{q84} &\eta^2_v=\al+\xi^1_x,\ \eta^2_x=\beta,
%\eta^2_u=0,
\end{align}
\end{subequations}
where $\xi^0=\xi^0(t),\ \xi^1=\xi^1(t,x),\  \eta^1=\al(t,x,v) u+\beta,\ \beta=\beta(t,x,v), \ \eta^2=\eta^2(t,x,v).$

\begin{theorem}\label{th1}
The system of DEs \eqref{q31}--\eqref{q35}  for finding Lie symmetries of Eq.~\eqref{q0} is equivalent to the system of DEs  \eqref{q41}--\eqref{q5} for finding Lie symmetries of  system  \eqref{q3}.
\end{theorem}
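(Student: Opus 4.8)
The plan is to exhibit the natural correspondence between the two symmetry operators and then verify the equivalence of the two systems of determining equations term by term, in both directions. The operator $X=\xi^0\partial_t+\xi^1\partial_x+(\alpha u+\beta)\partial_u$ of Eq.~\eqref{q0} will be matched with the operator of system~\eqref{q3} by putting $\beta(t,x)=\gamma_x(t,x)$ and adjoining the $v$-component $\eta^2=C_2v+\gamma$. Under this identification Eqs.~\eqref{q41} and \eqref{q42} are literally the relations \eqref{q32} and \eqref{q33}, so the theorem reduces to proving that, given \eqref{q32}--\eqref{q33}, the triple \eqref{q31}, \eqref{q34}, \eqref{q35} is equivalent to the pair \eqref{q43}, \eqref{q5}.

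\textbf{From system \eqref{q3} to Eq.~\eqref{q0}.} Assuming \eqref{q41}--\eqref{q5}, I would first differentiate the algebraic relation \eqref{q5}, $\alpha=C_2-\xi^1_x$, up to three times in $x$ and once in $t$; this gives $\alpha_x=-\xi^1_{xx}$, $\alpha_{xx}=-\xi^1_{xxx}$, $\alpha_{xxx}=-\xi^1_{xxxx}$ and $\alpha_t=-\xi^1_{xt}$. Substituting these into \eqref{q31} and \eqref{q34} makes every combination of the type $\alpha_{x\dots x}+\xi^1_{x\dots x}$ and $\alpha_t+\xi^1_{xt}$ vanish, so \eqref{q31} and \eqref{q34} become identities $0=0$. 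It then remains to recover \eqref{q35}: writing $\beta=\gamma_x$ and differentiating \eqref{q43} with respect to $x$ (here $A,B,C$ depend on $u$ alone, while $\xi^0_t$ and $C_2$ do not depend on $x$) produces precisely \eqref{q35}, once $\alpha_t=-\xi^1_{xt}$ is used again.

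\textbf{From Eq.~\eqref{q0} to system \eqref{q3}.} Assuming \eqref{q31}--\eqref{q35}, I would set $\gamma(t,x):=\int^{x}\beta(t,s)\,ds$, so that $\gamma_x=\beta$ and \eqref{q41}, \eqref{q42} become \eqref{q32}, \eqref{q33}; an additive function of $t$ in $\gamma$ is left free. The decisive step is to produce the constant $C_2$, i.e.\ to prove \eqref{q5}, which amounts to showing that $\alpha+\xi^1_x$ depends on neither $x$ nor $t$. If $A$ is non-constant, \eqref{q31} forces $(\alpha+\xi^1_x)_x=0$, and feeding this (with its $x$-derivatives) back into \eqref{q34} leaves $(\alpha+\xi^1_x)_t=0$; if $A$ is a nonzero constant, \eqref{q32} gives $3\xi^1_x-\xi^0_t=0$, whence $\xi^1_{xx}=0$, and one then treats \eqref{q33} and \eqref{q34} as identities in $u$, splitting into subcases according to the linear (in)dependence of $C'$, $B'$, $B$ and the constants, to reach the same conclusion. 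One then sets $C_2:=\alpha+\xi^1_x$, which is \eqref{q5}. Finally, by the computation of the previous paragraph, $\partial_x$ of the proposed relation \eqref{q43} coincides with \eqref{q35}; hence integrating \eqref{q35} in $x$ yields \eqref{q43} up to an additive function $\Phi(t,u)$. Differentiating the resulting relation in $u$ and eliminating the $C''$, $B'$, $A'$ terms by means of the $u$-derivatives of \eqref{q32}--\eqref{q33} (together with \eqref{q34}--\eqref{q35}) shows $\Phi_u=0$; the remaining $\Phi(t)$ is then absorbed into the still-free $t$-dependent part of $\gamma$ (equivalently, by the shift $v\mapsto v+(\text{function of }t)$), which establishes \eqref{q43}.

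\textbf{The main obstacle.} All the difficulty is concentrated in the direction from \eqref{q0} to system \eqref{q3}, in two places: first, the case analysis proving that \eqref{q31} and \eqref{q34} (supplemented, when $A$ is constant, by \eqref{q32}--\eqref{q33} and \eqref{q35}) really force $\alpha+\xi^1_x$ to be a genuine constant, since several degenerate forms of the coefficients $A$, $B$, $C$ must be handled separately; and second, the verification that the ``integration constant'' $\Phi(t,u)$ produced on integrating \eqref{q35} in $x$ is independent of $u$. The reverse implication is essentially bookkeeping once \eqref{q5} is in hand.
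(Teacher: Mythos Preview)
Your outline matches the paper's proof in structure: the easy direction uses \eqref{q5} to annihilate \eqref{q31} and \eqref{q34} and then differentiates \eqref{q43} in $x$ to recover \eqref{q35}; the harder direction splits into the cases $A'\ne 0$ (where \eqref{q31} and \eqref{q34} directly yield $\alpha+\xi^1_x=\text{const}$, i.e.\ \eqref{q5}) and $A=\text{const}$, after which \eqref{q35} is integrated in $x$ toward \eqref{q43}.

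Two places where the paper's execution differs from your sketch are worth noting. In the $A=\text{const}$ branch the paper does not read \eqref{q33}--\eqref{q34} directly as identities in $u$; instead it differentiates \eqref{q33} with respect to $x$ to obtain $(\alpha_x u+\beta_x)B'=-3\lambda\,\alpha_{xx}$ and then splits on the form of $B'$, the delicate subcase being $B'=\lambda_1/u$ (i.e.\ $B=\lambda_1\ln u+\lambda_2$), which requires substituting an exponential ansatz for $\alpha$ back into the remaining equations to force $\alpha\equiv 0$, while $B'=0$ with $\alpha_x\ne 0$ is shown to collapse the PDE to a linear one. For the integration constant the paper does not attempt your $\Phi_u=0$ argument at all: it simply remarks that the $x$-antiderivative of \eqref{q35} is determined only up to an arbitrary $F(t,u)$ and \emph{chooses} $F=(C_2-\xi^0_t)C$ to land on \eqref{q43}, so your proposed elimination via $u$-derivatives of \eqref{q32}--\eqref{q33} is a refinement rather than a reproduction of the paper's reasoning.
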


\begin{proof}
Let us compare systems \eqref{q31}--\eqref{q35} and \eqref{q41}--\eqref{q5}.

First, we will show, that system~\eqref{q31}--\eqref{q35} is a consequence of system~\eqref{q41}--\eqref{q5}.

Equations \eqref{q31} and \eqref{q34} vanish with \eqref{q5}.

Equations \eqref{q32}, \eqref{q41} and Eqs. \eqref{q33}, \eqref{q42} coincide providing $\beta=\gamma_x$.

Finally, differentiating \eqref{q43} with respect to $x$, we arrive exactly at Eq.~\eqref{q35}.

So, system of determining equations for finding potential symmetries \eqref{q41}--\eqref{q5} of Eq.~\eqref{q0} contains in system of determining equations for finding Lie symmetries \eqref{q31}--\eqref{q35} of Eq.~\eqref{q0}.

To show  that system~\eqref{q41}--\eqref{q5} is a consequence of system~\eqref{q31}--\eqref{q35} one needs to discuss two cases, namely:

1) $\al_x=-\xi^1_{xx}$,

2) $A=\lbd=const$.

In case 1) Eq.~\eqref{q32}, \eqref{q41} and Eqs. \eqref{q33}, \eqref{q42} coincide providing $\beta=\gamma_x$ and $\al_x=-\xi^1_{xx}$. Integrating \eqref{q35} with respect to $x$, taking into account \eqref{q34} and $\beta=\gamma_x$, we obtain
\be\label{q1}(\al u+\gamma_x)C'=F(t,u)-(\al_x u+\gamma_{xx})B-(\al_{xx} u+\gamma_{xxx})A- \xi^1_t u+\gamma_t,\ee
where $F(t,u)$ -- arbitrary function.
Setting $F(t,u)=(C_2-\xi^0_t)C$ we obtain exactly Eq.~\eqref{q43}.

Case 2). It turns out that we need to prove that $\al_x=0$ in order to show that system  \eqref{q41}--\eqref{q5} is a consequence of system~\eqref{q31}--\eqref{q35}. In the case $\al_x = 0,\ A = \lbd = const$ system \eqref{q31}--\eqref{q35} takes the form
\begin{subequations}\begin{align}
\label{q6}&\xi^1_x=\frac 13 \xi^0_t,\\
\label{q48}
&(\al u+\beta)B'=-\frac13\xi^0_tB,\\\label{q49}
&\al_t = -\xi^1_{tx},\\\label{q50}
&\beta_{x}C'=-\beta_{xx}B- \lbd \beta_{xxx} +\al_t u+\beta_t.\end{align}
\end{subequations}

Equation \eqref{q41} vanishes, providing \eqref{q6}; Eq. \eqref{q42} coincides with \eqref{q48} providing $\beta = \gamma_x$; integrating \eqref{q49} with respect to $t$, we obtain \eqref{q5}; integrating \eqref{q50} with respect to $x$, and taking into account $\beta = \gamma_x$ and \eqref{q49}, we obtain \be\label{q12}\beta C'=F(t,u)-\beta_x B-\lbd \beta_{xx}-\xi^1_t u+\gamma_t.\ee

Setting $F(t,u)=-\al u C'+(C_2-\xi^0_t)C$ we obtain exactly Eq.~\eqref{q43}.

Let us show that  $\al_x = 0$. Setting $A=const$ in Eq.~\eqref{q32}, we immediately obtain \eqref{q6}.

Using \eqref{q6}, we rewrite Eqs. \eqref{q33}--\eqref{q35} in the following form
\begin{subequations}\begin{align}
\label{q61}
&(\al u+\beta)B'=-\frac13\xi^0_tB-3\lbd\al_x,\\\label{q62}
&\al_xC'=\al_{xx}B -\lbd\al_{xxx}+\al_t+\xi^1_{tx},\\\label{q63}
&(\al_{x}u+\beta_{x})C'=-(\al_{xx}u+\beta_{xx})B- \lbd (\al_{xxx}u+\beta_{xxx})+\al_t u+\beta_t.\end{align}
\end{subequations}

Differentiating \eqref{q61} w.r.t. $x$, we arrive at
\be\label{q7}(\al_x u+\beta_x)B'=-3\lbd \al_{xx}.\ee
In the case $B'\ne 0, \frac{\lbd_1} u$, we obtain $\al_x=0$. So, we need to discuss $B'=\frac {\lbd_1 }u\ne0$ and $B' = 0$.

$B'=\frac {\lbd_1 }u\ne0$ gives us
\be\label{q8}B=\lbd_1 \ln u+\lbd_2.\ee

Substituting \eqref{q8} into \eqref{q61}, we obtain
\be\label{q9}\lbd_1 \al+3\lbd \al_x+\lbd_1 \beta \frac 1 u+\frac13 \lbd_1 \xi^0_t \ln u+\frac13\lbd_2 \xi^0_t=0.\ee
Splitting Eq.~\eqref{q9} w.r.t. different functions of $u$, we obtain
\be\label{q10}\xi^0_t=\beta=0,\ \al=f(t) e^{-\frac {\lbd_1}{3\lbd}x}.\ee
Substituting \eqref{q10} into \eqref{q62}--\eqref{q63}, we derive
\begin{subequations}\begin{align}\label{q71}
&9 \lambda  \lbd_1 f(t) C'(u)+\lbd_1^3 f(t) (1+3\ln u)+27 \lambda ^2 f'(t)+3 \lbd_1^2 \lbd_2 f(t)=0,\\\label{q72}&9 \lambda  \lbd_1 f(t) C'(u)+\lbd_1^3 f(t) (1-3\ln u)+27 \lambda ^2 f'(t)-3 \lbd_1^2 \lbd_2 f(t)=0.
\end{align}
\end{subequations}
Subtracting \eqref{q71} and \eqref{q72}, we obtain
\[6 \lbd_1^2 f(t) (\lbd_1 \ln u+\lbd_2)=0.\]

Obviously, there is just one possibility $f(t)=0$, which leads to $\al=0$.

So, we proved that $\al_x=0$.

%Equation \eqref{q41} vanishes providing \eqref{q6}.

%Equation \eqref{q42} coincides with \eqref{q33}. Equation \eqref{q34} takes the form \be\label{q11}\al_t=-\xi_{xt}.\ee

%Integrating \eqref{q35} w.r.t. $x$ and taking into account \eqref{q11}, we obtain \be\label{q12}\beta C'=F(t,u)-\beta_x B-\lbd \beta_{xx}-\xi^1_t u+\gamma_t.\ee

%Setting $F(t,u)=-\al u C'+(C_2-\xi^0_t)C$ we obtain exactly Eq.~\eqref{q43}.

Finally, we should discuss $B'=0$. All subcases, except one, give us $\al_x=0$. In the case $\al_x\ne0$ we obtain the linear equation
\[u_t = \lbd u_{xx} + \mu u_{xx} + \gamma_1 u + \gamma_2,\]
which contradicts to our assumption about examination of nonlinear PDEs only.

The proof is now completed.
\end{proof}
%Solving system~\eqref{q31}--\eqref{q35}, we obtain \be\label{q36}\ba A=\lbd,\ B=\lbd_1, \ C=\lbd_2 u+\lbd_3,\ \xi^0=9 \lbd C_3 t+C_0,\ \xi^1=3\lbd C_3 x+f(t),\ \al=-C_3 \lbd_1 (x+\lbd_2 t)+C_5,\\ beta_t=\lbd \beta_{xxx}+\lbd_1 \beta_{xx}+\lbd_2 \beta_{x}.\ea\ee

%The variable $\al$ depends on $x$. If we substitute $\xi^1$ and $\al$ into Eq.~\eqref{q5}, we would obtain $C_3 \lbd_1=0$. It means that $\al_x=0$, what contradicts to the previous assumption. So, we would obtain more narrow result.

\textbf{Consequence 1.} Lie symmetries of any nonlinear PDE system of the form  \eqref{q3} cannot generate a potential symmetry for the relevant scalar equation of the form  \eqref{q0}.

\begin{theorem}\label{th2}
The system of determining equations  \eqref{q31}--\eqref{q35}  for finding Lie symmetries of Eq.~\eqref{q0}  with   $A \ne const$  is equivalent
to the system  \eqref{q81}--\eqref{q84} for finding Lie symmetries of  system  \eqref{q4} with $A\ne const.$

%The system for finding Lie symmetries of Eq.~\eqref{q0} with $A \ne const$, namely \eqref{q31}--\eqref{q35}, is equivalent to the system for finding potential Lie symmetries of Eq.~\eqref{q0} with differential consequence, namely \eqref{q81}--\eqref{q84}.
\end{theorem}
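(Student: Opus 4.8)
The plan is to mirror the proof of Theorem~\ref{th1}: to show that, under the hypothesis $A\ne const$ (equivalently $A'\not\equiv0$), each of the systems \eqref{q31}--\eqref{q35} and \eqref{q81}--\eqref{q84} is a consequence of the other. The only formal difference between the two systems is that in \eqref{q81}--\eqref{q84} the coefficients $\al,\beta$ are allowed to depend on $v$ and there is an extra unknown $\eta^2(t,x,v)$; so the real content of the theorem is that this $v$-dependence is illusory whenever $A$ is non-constant. I would therefore begin with the direction \eqref{q81}--\eqref{q84}$\Rightarrow$\eqref{q31}--\eqref{q35}.

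First I would differentiate \eqref{q81} with respect to $v$; since $A,B,C$ and $\xi^0,\xi^1$ are $v$-free, this gives $(\al_v u+\beta_v)A'=0$, and $A'\not\equiv0$ forces $\al_v=\beta_v=0$, so $\al=\al(t,x)$, $\beta=\beta(t,x)$ --- exactly the functional class of \eqref{q31}--\eqref{q35}. Cross-differentiating the two relations in \eqref{q84} ($\eta^2_{vx}=\eta^2_{xv}$) then yields $\al_x+\xi^1_{xx}=\beta_v=0$, so that \eqref{q31} holds identically. Integrating \eqref{q84} once more gives $\eta^2=(\al+\xi^1_x)v+h(t,x)$ with $h_x=\beta$, hence $\eta^2_t=(\al_t+\xi^1_{xt})v+h_t$; substituting this into \eqref{q83} and using that \eqref{q83} must be an identity in $v$ while its left-hand side is $v$-free, the coefficient of $v$ must vanish, giving $\al_t+\xi^1_{xt}=0$, so \eqref{q34} holds identically as well. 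Equation \eqref{q81} is literally \eqref{q32}, and \eqref{q82} with $\al_v=0$ is literally \eqref{q33}. Finally, feeding $\al_v=\beta_v=0$, $\al_x=-\xi^1_{xx}$, $\al_{xx}=-\xi^1_{xxx}$, $\al_t=-\xi^1_{xt}$, $\eta^2_t=h_t$ into \eqref{q83} collapses the quadratic- and cubic-in-$u$ terms and reduces it to $(\al u+\beta)C'=(\al+\xi^1_x-\xi^0_t)C-(\al_x u+\beta_x)B-(\al_{xx}u+\beta_{xx})A-\xi^1_t u+h_t$; differentiating this with respect to $x$ (using $h_{tx}=\beta_t$) reproduces exactly \eqref{q35}.

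For the converse I would start from a solution of \eqref{q31}--\eqref{q35} with $A\ne const$: \eqref{q31} and $A'\not\equiv0$ force $\al_x+\xi^1_{xx}=0$, hence $\al_{xx}+\xi^1_{xxx}=0$ and $\al_{xxx}+\xi^1_{xxxx}=0$, and then \eqref{q34} forces $\al_t+\xi^1_{xt}=0$. Treating the $v$-independent $\al,\beta$ as such turns \eqref{q32} into \eqref{q81} and \eqref{q33} into \eqref{q82}; the compatibility $\p_x(\al+\xi^1_x)=0=\p_v\beta$ lets one set $\eta^2=(\al+\xi^1_x)v+h(t,x)$ with $h$ an $x$-antiderivative of $\beta$, which gives \eqref{q84}; and integrating \eqref{q35} with respect to $x$ (taking into account $\al_x+\xi^1_{xx}=0$ and $\al_t+\xi^1_{xt}=0$) produces $(\al u+\beta)C'=F(t,u)-(\al_x u+\beta_x)B-(\al_{xx}u+\beta_{xx})A-\xi^1_t u+h_t$ with an arbitrary $F(t,u)$, and the choice $F(t,u)=(\al+\xi^1_x-\xi^0_t)C$ recovers \eqref{q83}, the residual $t$-freedom in $h$ absorbing any purely $t$-dependent slack. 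The essential point --- and the reason the hypothesis $A\ne const$ cannot be dropped --- is the very first step: it is precisely $A'\not\equiv0$ that annihilates $\al_v$ and $\beta_v$ and thereby forbids \eqref{q4} from carrying symmetries with genuine $v$-dependence. The part needing the most care is the bookkeeping around \eqref{q83}/\eqref{q35}: one must check that the higher $u$- and $v$-derivative terms in \eqref{q83} really collapse as claimed and that the integration ``constant'' $F(t,u)$ is consistently pinned down by \eqref{q32}--\eqref{q33}.
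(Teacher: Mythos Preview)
Your proof is correct and follows essentially the same approach as the paper's: differentiate \eqref{q81} with respect to $v$ (using $A'\not\equiv0$) to force $\al_v=\beta_v=0$, use \eqref{q84} to obtain $\al_x+\xi^1_{xx}=0$ and $\eta^2=(\al+\xi^1_x)v+h$ with $h_x=\beta$, and then identify \eqref{q81}--\eqref{q83} with \eqref{q32}--\eqref{q35} (the $v$-coefficient of \eqref{q83} giving $\al_t+\xi^1_{xt}=0$ and the $x$-derivative giving \eqref{q35}). The converse direction, which you spell out in detail via the choice $F(t,u)=(\al+\xi^1_x-\xi^0_t)C$, the paper simply dispatches by reference to the proof of Theorem~\ref{th1}.
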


\begin{proof}
Let us compare systems \eqref{q31}--\eqref{q35} and \eqref{q81}--\eqref{q84}. In quite similar way, as for the proof of Theorem~\ref{th1}, we may prove that system~\eqref{q81}--\eqref{q84} is a consequence of system~\eqref{q31}--\eqref{q35}.

Finally, we should prove that system~\eqref{q31}--\eqref{q35} is a consequence of system~\eqref{q81}--\eqref{q84}.

%To prove that, we need to discuss two cases, namely 1) $A\ne const$, 2) $A=const$. In the case $A\ne const$,
Differentiating Eq.~\eqref{q81} with respect to $v$, we derive
\be\label{q37}\al_v=0,\ \beta_v=0.\ee

Integrating \eqref{q84} and taking into account \eqref{q37}, we obtain
\be\label{q38}\eta^2=(\al+\xi^1_x)v+f(t,x).\ee

Differentiating \eqref{q38} with respect to $x$ and taking into account \eqref{q84}, we arrive at
\be\label{q39}(\al_x+\xi^1_{xx})v+f_x=\beta.\ee

Differentiating \eqref{q39} with respect to $v$ and taking into account \eqref{q37}, we derive
\be\label{q40}\al_x=-\xi^1_{xx},\ \beta = f_x(t,x).\ee
So, Eq.~\eqref{q31} vanishes providing \eqref{q40}; Eq.~\eqref{q32} coincides with Eq.~\eqref{q81}; Eq.~\eqref{q33} coincides with Eq.~\eqref{q82} providing \eqref{q37}; Eq.~\eqref{q34} gives us $\al_t+\xi^1_{tx}=0$ the same condition one can obtain from \eqref{q83} differentiating it with respect to $v$ and taking into account \eqref{q37}, \eqref{q38} and \eqref{q40}; Eq.~\eqref{q35} we may obtain differentiating Eq.~\eqref{q83} with respect to $x$ and taking into account \eqref{q37} and \eqref{q40}.

The proof is now completed.
\end{proof}

\textbf{Consequence 2.} Lie symmetries of any nonlinear PDE system of the form  \eqref{q4} with $A\ne const$ cannot generate a potential symmetry for the relevant scalar equation of the form  \eqref{q0} with $A\ne const$.

\begin{theorem}\label{th3}
The system of determining equations \eqref{q31}--\eqref{q35} for finding Lie symmetries of Eq.~\eqref{q0}  with $A = const$  is NOT equivalent
to the system~\eqref{q81}--\eqref{q84} for finding Lie symmetries of  system  \eqref{q4} with   $A = const$.
Moreover,   system  \eqref{q81}--\eqref{q84}  produces  potential  symmetries of Eq.~\eqref{q0} with correctly-specified functions $B$ and $C$.

%The system for finding Lie symmetries of Eq.~\eqref{q0} with $A = const$, namely \eqref{q31}--\eqref{q35}, is \textbf{non} equivalent to the system for finding potential Lie symmetries of Eq.~\eqref{q0} with differential consequence, namely .
\end{theorem}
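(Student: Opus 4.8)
The plan is to substitute $A=\lambda=\mathrm{const}$ into both systems of determining equations and to isolate the point where the equivalence argument of Theorem~\ref{th2} breaks down. There the decisive step was that, since $A'\ne 0$, differentiating \eqref{q81} in $v$ gave $\al_v=\beta_v=0$; for $A=\lambda$ this is vacuous and \eqref{q81} only yields $3\xi^1_x=\xi^0_t$, so $\xi^1$ is affine in $x$ and $\xi^1_{xx}=\xi^1_{xxx}=0$. Since the coefficient $\eta=\al(t,x)u+\beta(t,x)$ in \eqref{q31}--\eqref{q35} is $v$-free by construction, it suffices — both for the non-equivalence and, via the remark after \eqref{q2}, for the ``moreover'' part — to exhibit a single nonlinear equation \eqref{q0} with $A=\mathrm{const}$ whose potential system \eqref{q4} carries a Lie symmetry with $\al_v\not\equiv 0$ (equivalently, with $\eta^1$ depending on $v$).

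I would search for \emph{vertical} such operators, $\xi^0=\xi^1=0$; note this is consistent only because $A'=0$ (for $A'\ne 0$ equation \eqref{q81} would force $\al=\beta=0$). If $B'\equiv 0$ one checks, as in Theorem~\ref{th1}, that \eqref{q82}--\eqref{q84} then give $\al_v=\beta_v=0$, so no potential symmetry arises — in particular the KdV, mKdV, $K(m,1)$ and KdV--Burgers equations, for which $B$ is constant, have none of this type. Hence I would take $B'\not\equiv 0$ and try the affine ansatz $B=\mu u+\nu$, $\mu\ne 0$, the natural candidate suggested by \eqref{q82} read as a relation (for each $u$) among $1,u,B,B',uB'$. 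Then the $u^1$-part of \eqref{q82} is a first-order linear ODE in $v$ with general solution $\al=g(t,x)\,e^{-\mu v/(3\lambda)}$, its $u^0$-part gives $\beta=-\tfrac{3\lambda}{\mu}g_x\,e^{-\mu v/(3\lambda)}$, and \eqref{q84} gives $\eta^2=-\tfrac{3\lambda}{\mu}g\,e^{-\mu v/(3\lambda)}+\mathrm{const}$; thus every coefficient is governed by the single function $g(t,x)$.

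It remains to impose \eqref{q83}. Substituting the above and splitting with respect to $e^{-\mu v/(3\lambda)}$ and the powers of $u$ (all treated as independent), the $u^3$- and $u^2$-coefficients force $C$ to be the cubic
\[
C(u)=\frac{\mu^2}{9\lambda}\,u^3+\frac{\mu\nu}{3\lambda}\,u^2+d\,u+e ,
\]
with $d,e$ arbitrary constants, while the remaining coefficients collapse to a single \emph{linear} third-order equation $g_t=\lambda g_{xxx}+\nu g_{xx}+d\,g_x+\tfrac{\mu e}{3\lambda}g$ for $g$. This equation has infinitely many solutions, so for such $B$ and $C$ the system \eqref{q81}--\eqref{q84} admits an infinite-dimensional family of operators with $\al_v=-\tfrac{\mu}{3\lambda}g\,e^{-\mu v/(3\lambda)}\not\equiv 0$, i.e. of pure potential symmetries; in particular \eqref{q31}--\eqref{q35} and \eqref{q81}--\eqref{q84} are not equivalent. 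The mechanism is transparent: $u=\tfrac{3\lambda}{\mu}(\ln w)_x$ maps the scalar equation with this $C$ onto the very linear equation satisfied by $g$; for $\lambda=1,\ \mu=3,\ \nu=d=e=0$ it is simply $u_t=(u_{xx}+3uu_x+u^3)_x$, linearized by $u=(\ln w)_x$, $w_t=w_{xxx}$.

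The main obstacle is precisely this last split: once $\al$ genuinely depends on $v$ the terms $\lambda\al_{vv}u^3$ and $3\lambda\al_{vx}u^2$ in \eqref{q83} are present, and one must verify that the resulting polynomial identity in $u$ can be satisfied \emph{without} forcing $g\equiv 0$ or reducing $C$ to a linear function (which the standing assumption excludes) — i.e. that the coefficient equations collapse exactly to the displayed cubic $C$ together with a solvable linear equation for $g$. A secondary and more tedious point, needed only if one wants the full list rather than the single family above, is to show that besides $B$ affine equation \eqref{q82} admits at most the borderline case $B=\lambda_1\ln u+\lambda_2$ (cf.~\eqref{q8}) and to decide whether it, too, yields potential symmetries or collapses as in Theorem~\ref{th1}.
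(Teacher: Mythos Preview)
Your argument is correct and, at its core, coincides with the paper's: both establish non-equivalence by exhibiting, for $A=\lambda$, $B=\mu u$ and $C=\tfrac{\mu^2}{9\lambda}u^3$, a Lie symmetry of system~\eqref{q4} whose $u$-coefficient depends on $v$, hence is a pure potential symmetry of the scalar equation. The paper simply writes down this single example (with $g\equiv C_2$ constant and the trivial translations $C_0\partial_t+C_1\partial_x$ adjoined) and verifies it, without deriving it.

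Your route differs in being constructive and more general. You explain why the equivalence argument of Theorem~\ref{th2} collapses precisely at the step where $A'\ne0$ was used, restrict to vertical generators, and then solve \eqref{q82}--\eqref{q84} with the affine ansatz $B=\mu u+\nu$ to obtain the full family $C=\tfrac{\mu^2}{9\lambda}u^3+\tfrac{\mu\nu}{3\lambda}u^2+du+e$ together with an \emph{infinite-dimensional} algebra of potential symmetries parametrised by solutions of a linear third-order equation for $g$. The linearisation $u=\tfrac{3\lambda}{\mu}(\ln w)_x$ you point out makes the mechanism transparent and explains the infinite dimensionality; the paper neither mentions this nor the broader family with $\nu,d,e\neq0$. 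Conversely, the paper's bare example has the virtue of brevity: to prove the theorem as stated, a single verified instance suffices, and your surrounding discussion of the $B'\equiv0$ case and of the borderline $B=\lambda_1\ln u+\lambda_2$ is informative but not needed for the statement.
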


\begin{proof}
In order to prove this theorem, it is enough to present an example of a Lie symmetry of \eqref{q4} (with correctly-specified functions $B$ and $C$) leading to a pure potential symmetry of the relevant equation of the form \eqref{q0}.
So,
 we are going to present a particular solution of system~\eqref{q81}--\eqref{q84} with correctly-specified functions $B$ and $C$, which is not obtainable from system~\eqref{q31}--\eqref{q35}.

%1. System \[u_t = \lbd u_{xx} + \mu u_x + \gamma_1 u + \gamma_2,\ v_x = u\] is invariant under the operator \[\begin{array}{c} X = (C_3 t+C_0)\p_t + \Bigr[\frac{2 C_3 (m^2 - 3 \gamma_1 \lbd)}{9 \lbd} t +\frac{C_3}3 x + C_2 \Bigr] \p_x + \Bigr[-\frac{C_3 \mu}{9 \lbd} v-(\frac{C_3 \gamma_1 \mu}{ 9 \lbd} t +\frac{C_3 \mu}{9 \lbd} x - C_4) u + g_x \Bigr]\p_u + \\ + \Bigr[-(\frac{C_3 \gamma_1 \mu}{ 9 \lbd} t +\frac{C_3 \mu}{9 \lbd} x - C_4 -\frac{C_3}3)v + g \Bigr]\p_v, \ea \] \[g_t = \lbd g_{xxx} + \mu g_{xx} + \gamma_1 g_x + \gamma_2 (\frac{C_3 \gamma_1 \mu}{ 9 \lbd} t +\frac{C_3 \mu}{9 \lbd} x - C_4 + \frac{2 C_3}{3}),\ C_3 \mu\ne0.\]

We have noted that the particular solution of system~\eqref{q81}--\eqref{q84} with
\be\label{q45}A = \lbd,\ B= \mu u,\ C = \frac{\mu^2}{9 \lbd} u^3\ee
has the form
\be\label{q44}\xi^0 = C_0,\ \xi^1 = C_1,\ \eta^1 = \mu C_2 e^{-\frac \mu {3\lbd}v} u,\ \eta^2 = -3 \lbd C_2 e^{-\frac \mu {3\lbd}v} + C_3.\ee

Substituting \eqref{q45} and \eqref{q44} into \eqref{q4} and \eqref{q2} respectively, we obtain the system
\be\label{q45a} v_t = \lbd u_{xx} + \mu u u_x + \frac{\mu^2}{9 \lbd} u^3,\ v_x = u, \ v_{xx} = u_x.\ee
which is invariant under the operator
\be\label{q46}X = C_0 \p_t + C_1 \p_x + \mu C_2 e^{-\frac \mu {3\lbd}v} u \p_u + \Bigr (-3 \lbd C_2 e^{-\frac \mu {3\lbd}v} + C_3\Bigr)\p_v.\ee

So, \eqref{q46} is the Lie symmetry of  system \eqref{q45a}.
On the other hand, the coefficient in front of $\p_u$ in \eqref{q46} explicitly depends on $v$, therefore operator \eqref{q46} is not obtainable from system~\eqref{q31}--\eqref{q35}. So, operator \eqref{q46} is the operator of pure potential  symmetry of the equation
\be\label{q47} u_t = (\lbd u_{xx} + \mu u u_x + \frac{\mu^2}{9 \lbd} u^3)_x.\ee
More precisely, operator  \be\label{q46a}X = C_0 \p_t + C_1 \p_x + \mu C_2 e^{-\frac \mu {3\lbd}v} u \p_u \ee
with $v_x = u$  is a potential symmetry of Eq.~\eqref{q47}.

The proof is now completed.
\end{proof}

It is interesting to note that Eq.~\eqref{q47} with $\mu=0$ is nothing else but the modified KdV equation. However, the potential symmetry \eqref{q46a} with $\mu=0$ degenerates to the Lie symmetry
\[X = C_0 \p_t + C_1 \p_x, \]
which corresponds to the time and space translations.

\section{Conclusions.}

In this paper,  it was shown that potential symmetries of  equations belonging to the PDE class~\eqref{q0} cannot be identified in the standard way  using the relevant system \eqref{q3}. In fact, we have proved
that the system of DEs \eqref{q41}--\eqref{q5} for finding Lie  symmetries of  system \eqref{q3}
%without differential consequence
 is equivalent to the system of DEs \eqref{q31}--\eqref{q35} for finding Lie symmetries of Eq.~\eqref{q0}. This result is presented in Theorem \ref{th1}.

The situation is different if one takes into account the fact that system \eqref{q3}  consists of PDEs of {\it different order}, therefore  one should take
into account  differential consequences of PDE of lower order.
As a result it was proved the similar result (see Theorem \ref{th2}), however under the restriction $A \ne const$.

In contradiction to the case $A \ne const$,  the case $A = const$ is quite different. We have proved  that systems of DEs \eqref{q81}--\eqref{q84} and \eqref{q31}--\eqref{q35} are inequivalent (see, Theorem \ref{th3}).
%Theorems \ref{th1} and \ref{th3} underline the importance of taking into account of differential consequences of the lower order equations when we discuss the systems of equations.
 Moreover, a highly nontrivial  example of potential Lie symmetry of Eq.~\eqref{q0}, namely Eq.~\eqref{q47},  possessing  the potential symmetry  \eqref{q46} was found.

 It should be stressed that the issue discovered above does not occurs in the case of reaction-convection equations because the relevant system for finding potential symmetries consists of PDEs of {\it the same  order}. In particular, it occurs in the case of nonlinear diffusion equation \cite{bl-re-ku-88}.

The work is in progress. In a forthcoming paper,  we are going to present  a complete Lie  classification   and  a complete potential  symmetry classification of Eq.~\eqref{q0} together with some new  exact solutions.

%Also we are going to discuss $Q$-conditional symmetries and potential $Q$-conditional symmetries of Eq.~\eqref{q0}.


\begin{thebibliography}{10}

\bibitem{ablo-segu-1981}
M.~J. Ablowitz and H.~Segur.
\newblock {\em Solitons and the inverse scattering transform}, volume~4 of {\em
  SIAM Studies in Applied Mathematics}.
\newblock Society for Industrial and Applied Mathematics (SIAM), Philadelphia,
  Pa., 1981.

\bibitem{arri-ashl-bloo-16}
D.~J. Arrigo, B.~P. Ashley, S.~J. Bloomberg, and T.~W. Deatherage.
\newblock Nonclassical symmetries of a nonlinear diffusion-convection/wave
  equation and equivalents system.
\newblock {\em Symmetry}, 8(12):Art. 140, 10, 2016.

\bibitem{benn-1966}
D.~J. Benney.
\newblock Long waves on liquid films.
\newblock {\em J. Math. and Phys.}, 45:150--155, 1966.

\bibitem{blum-reid-kume-1988(E)}
G.~W. Bluman, G.~J. Reid, and S.~Kumei.
\newblock Erratum: ``{N}ew classes of symmetries for partial differential
  equations''.
\newblock {\em J. Math. Phys.}, 29(10):2320, 1988.

\bibitem{bl-re-ku-88}
G.~W. Bluman, G.~J. Reid, and S.~Kumei.
\newblock New classes of symmetries for partial differential equations.
\newblock {\em J. Math. Phys.}, 29(4):806--811, 1988.

\bibitem{brac-2004}
P.~Bracken.
\newblock Some methods for generating solutions to the {K}orteweg-de {V}ries
  equation.
\newblock {\em Phys. A}, 335(1-2):70--78, 2004.

\bibitem{brac-2005}
P.~Bracken.
\newblock Symmetry properties of a generalized {K}orteweg-de {V}ries equation
  and some explicit solutions.
\newblock {\em Int. J. Math. Math. Sci.}, (13):2159--2173, 2005.

\bibitem{char-vane-soph-2014}
K.~Charalambous, O.~Vaneeva, and C.~Sophocleous.
\newblock Group classification of variable coefficient {$K(m,n)$} equations.
\newblock {\em J. Geom. Symmetry Phys.}, 33:79--90, 2014.

\bibitem{chen-2000}
H.~Chen.
\newblock Travelling-waves for generalized {KDV}-{B}urgers equations.
\newblock In {\em Nonlinear {PDE}'s, dynamics and continuum physics ({S}outh
  {H}adley, {MA}, 1998)}, volume 255 of {\em Contemp. Math.}, pages 1--23.
  Amer. Math. Soc., Providence, RI, 2000.

\bibitem{cher-sero-pliu-2018}
R.~Cherniha, M.~Serov, and O.~Pliukhin.
\newblock {\em Nonlinear Reaction-Diffusion-Convection Equations: Lie and
  Conditional Symmetry, Exact Solutions and Their Applications}.
\newblock Chapman \& Hall/CRC Monographs and Research Notes in Mathematics.
  Boca Raton, Florida: CRC Press, 2018.

\bibitem{das-1989}
A.~Das.
\newblock {\em Integrable models}, volume~30 of {\em World Scientific Lecture
  Notes in Physics}.
\newblock World Scientific Publishing Co., Inc., Teaneck, NJ, 1989.

\bibitem{fu-sh-se-93}
W.~I. Fushchych, W.~M. Shtelen, and M.~I. Serov.
\newblock {\em Symmetry analysis and exact solutions of equations of nonlinear
  mathematical physics}, volume 246 of {\em Mathematics and its Applications}.
\newblock Kluwer Academic Publishers Group, Dordrecht, 1993.

\bibitem{gard-gree-krus-miur-1967}
C.~S. Gardner, J.~M. Greene, M.~D. Kruskal, and R.~M. Miura.
\newblock Method for solving the korteweg-devries equation.
\newblock {\em Phys. Rev. Lett.}, 19:1095--1097, 1967.

\bibitem{ga-wi-92}
J.-P. Gazeau and P.~Winternitz.
\newblock Symmetries of variable coefficient {K}orteweg--de {V}ries equations.
\newblock {\em J. Math. Phys.}, 33(12):4087--4102, 1992.

\bibitem{john-1972}
R.~S. Johnson.
\newblock Shallow water waves on a viscous fluid--the undular bore.
\newblock {\em The Physics of Fluids}, 15(10):1693--1699, 1972.

\bibitem{kr-vi-89}
I.~S. Krasil'shchik and A.~M. Vinogradov.
\newblock Nonlocal trends in the geometry of differential equations:
  symmetries, conservation laws, and {B}\"acklund transformations.
\newblock {\em Acta Appl. Math.}, 15(1-2):161--209, 1989.

\bibitem{miur-1976}
R.~M. Miura.
\newblock The {K}orteweg-de {V}ries equation: a survey of results.
\newblock {\em SIAM Rev.}, 18(3):412--459, 1976.

\bibitem{ol-93}
P.~J. Olver.
\newblock {\em Applications of {L}ie groups to differential equations}, volume
  107 of {\em Graduate Texts in Mathematics}.
\newblock Springer, New York, second edition, 1993.

\bibitem{ov-82}
L.~V. Ovsiannikov.
\newblock {\em Group analysis of differential equations}.
\newblock Academic Press, Inc. [Harcourt Brace Jovanovich, Publishers], New
  York-London, 1982.
\newblock Translated from the Russian by Y. Chapovsky, Translation edited by
  William F. Ames.

\bibitem{pego-smer-wein-1993}
R.~L. Pego, P.~Smereka, and M.~I. Weinstein.
\newblock Oscillatory instability of traveling waves for a {K}d{V}-{B}urgers
  equation.
\newblock {\em Phys. D}, 67(1-3):45--65, 1993.

\bibitem{rose-hyma-1993}
P.~Rosenau and J.~M. Hyman.
\newblock Compactons: Solitons with finite wavelength.
\newblock {\em Phys. Rev. Lett.}, 70:564--567, Feb 1993.

\bibitem{rose-zilb-2018}
P.~Rosenau and A.~Zilburg.
\newblock Compactons.
\newblock {\em Journal of Physics A: Mathematical and Theoretical},
  51(34):343001, 2018.

\bibitem{vane-post-2017}
O.~Vaneeva and S.~Po\v{s}ta.
\newblock Equivalence groupoid of a class of variable coefficient
  {K}orteweg--de {V}ries equations.
\newblock {\em J. Math. Phys.}, 58(10):101504, 12, 2017.

\end{thebibliography}
\end{document}